\documentclass[sigconf,nonacm,natbib=false]{acmart}
\usepackage[sort,numbers]{natbib}
\usepackage{booktabs}
\usepackage{multirow}
\usepackage{algorithm}
\usepackage{algpseudocode}

\renewcommand\footnotetextcopyrightpermission[1]{}
\newtheorem{assumption}{Assumption}
\newtheorem{proposition}{Proposition}
\newtheorem{theorem}{Theorem}

\newcommand{\dtv}{d_{\mathrm{TV}}}

\begin{document}

\title{Taming Tail Risk in Financial Markets: Conformal Calibration for Nonstationary Portfolio VaR}

\author{Marc Schmitt}
\affiliation{\institution{University of Oxford}\city{}\country{}}
\email{marc.schmitt@cs.ox.ac.uk}

\begin{abstract}
Value-at-risk (VaR) forecasts drive trading constraints and capital allocation, yet realized exceedance rates concentrate in stress periods, when losses are largest. This paper studies sequential one-sided VaR calibration via conformal prediction. It proposes regime-weighted conformal calibration (RWC), which builds a safety buffer from past forecast errors using exponential time decay and regime-similarity weights. RWC is model-agnostic and wraps any conditional quantile forecaster to target a desired exceedance rate, with time-weighted calibration (TWC) as a special case. Coverage bounds are derived for arbitrary data-driven weights under smooth regime drift, without assuming weighted exchangeability. On the CRSP index and sixteen U.S. equity portfolios, RWC and TWC are benchmarked against modern online conformal methods at the Basel-relevant 99\% and 97.5\% levels. TWC is a strong default under drift, while regime weighting improves stress-period calibration for slowly adapting forecasters, and diagnostics indicate when localization is reliable.
\end{abstract}

\keywords{Financial markets, conformal prediction, risk management, backtesting, regime switching, model validation}

\maketitle

\section{Introduction}
Forecasting and controlling tail risk in financial markets is notoriously difficult. Return distributions are heavy-tailed, exhibit volatility clustering, and shift over time; even a well-specified conditional model can degrade rapidly during crises or structural change. This nonstationarity is often \emph{regime-structured}: volatility and tail behavior cluster into persistent states (calm versus stress), as captured by regime-switching models \cite{hamilton1989,gray1996}; conditioning on the volatility state has first-order economic value \cite{moreira2017}. As a consequence, value-at-risk (VaR) forecasts are systematically miscalibrated, with realized exceedance rates deviating from nominal targets in stress periods \cite{berkowitz2002}. VaR-type constraints directly affect leverage and risk-taking \cite{basak2001,adrian2014}, and regulatory backtesting penalizes excess and clustered violations \cite{bcbs2019}. For example, under the Basel traffic-light framework, a bank that records too many VaR exceptions in its one-year backtest is moved into a higher penalty zone, and the multiplier applied to its market-risk capital charge rises accordingly.

Conformal prediction is attractive here because it wraps arbitrary black-box forecasters and provides finite-sample guarantees under exchangeability \cite{vovk2005}. Financial data, however, are sequential and nonstationary, so classical conformal validity degrades under time dependence and distribution shift. Two research threads address this. \emph{Weighted} conformal prediction reweights calibration scores, originally with density-ratio weights under covariate shift \cite{tibshirani2019}, and more generally with arbitrary weights at the price of a coverage gap governed by how far the data deviate from exchangeability \cite{barber2023}. \emph{Online} conformal prediction instead adapts a level or threshold by feedback on realized miscoverage: adaptive conformal inference (ACI) \cite{gibbs2021}, its parameter-free extension DtACI \cite{gibbs2024}, conformal PID control \cite{angelopoulos2023}, and related time-series methods \cite{xu2021,xu2023,zaffran2022,bastani2022}. Online methods guarantee long-run \emph{marginal} coverage under arbitrary shift; what they do not target is \emph{regime-conditional} stability: keeping exceedance rates near target within calm and stress states separately, rather than only on average. Since exact distribution-free conditional coverage is impossible \cite{barber2021}, practical regime-conditional calibration is necessarily approximate, and the relevant question is an empirical one: which calibration mechanisms deliver stable within-regime exceedance rates, at what cost in capital (bound tightness)?

We study this question for sequential one-sided VaR calibration. Our method, \emph{regime-weighted conformal calibration} (RWC), wraps any conditional quantile forecaster and calibrates an additive safety buffer as a weighted conformal quantile of past forecast errors, with weights combining exponential time decay (adaptation under drift) and a Gaussian kernel on simple regime features (realized volatility and mean absolute return), so that calibration borrows strength from comparable market conditions. The kernel-free special case, time-weighted conformal calibration (TWC), is a strong, computationally trivial default.

\textbf{Contributions.} (i)~This paper formalizes sequential one-sided VaR calibration under regime-structured nonstationarity and proposes regime-conditional calibration metrics aligned with practitioner backtests. (ii)~It gives an honest theoretical treatment for \emph{heuristic} (recency and similarity) weights: rather than assuming weighted exchangeability with respect to the algorithm's own weights, the analysis conditions on the regime path and bounds the coverage gap via total-variation smoothness of score distributions across regimes and time, building directly on nonexchangeable conformal theory \cite{barber2023}. The bound cleanly separates regime-mismatch bias, temporal-drift bias, and an effective-sample-size variance term, each observable through the reported diagnostics. (iii)~It benchmarks on the CRSP value-weighted index and sixteen CRSP-derived portfolios (1990--2024) against modern online conformal baselines (ACI, DtACI, conformal PID) and classical risk models (historical simulation, GARCH-$t$, gradient-boosting quantile regression as base forecasters), at both 99\% and 97.5\% VaR, with block-bootstrap inference and dynamic-quantile backtests. Regime weighting has a cost: when the base forecaster is already adaptive, time decay captures most of the drift. It delivers consistent stress-regime improvements for slowly adapting base models, however, and the diagnostics predict when localization helps.

\section{Related work}
\textbf{Conformal prediction and weighted extensions.} Split conformal prediction yields finite-sample marginal coverage under exchangeability \cite{vovk2005}; conformalized quantile regression adapts intervals to heteroscedasticity \cite{romano2019}. Weighted conformal prediction extends validity to covariate shift with density-ratio weights \cite{tibshirani2019}. \citet{barber2023} analyze conformal prediction \emph{beyond} exchangeability: with arbitrary fixed weights, coverage degrades by at most a weighted sum of total-variation distances between the realized data sequence and its swapped versions. Our guarantee instantiates this program for recency--similarity weights in a sequential one-sided setting, with the total-variation terms bounded by interpretable smoothness constants. Localized conformal prediction \cite{guan2023} and class-conditional variants \cite{ding2023} pursue related conditional targets in exchangeable settings; our regime weighting is a continuous, sequential analogue of such stratification.

\textbf{Online conformal methods.} ACI adjusts an effective miscoverage level by feedback \cite{gibbs2021}; DtACI aggregates ACI experts across learning rates and adapts to unknown shift sizes \cite{gibbs2024}; conformal PID control combines quantile tracking, error integration, and scorecasting \cite{angelopoulos2023}; further time-series approaches include EnbPI \cite{xu2021}, sequential predictive conformal inference \cite{xu2023}, aggregated ACI \cite{zaffran2022}, and multivalid prediction \cite{bastani2022}. These methods guarantee long-run marginal coverage under arbitrary distribution shift---a guarantee our weighted approach does not enjoy---but they adapt \emph{after} miscoverage materializes, whereas regime weighting conditions \emph{ex ante} on where the market currently sits. We therefore treat ACI, DtACI, and PID as primary baselines rather than alternatives to dismiss, and evaluate whether ex-ante regime localization adds value over ex-post feedback. Conformal risk control \cite{angelopoulos2024} generalizes conformal guarantees to monotone risk functionals; our target is one-sided quantile coverage, so we use ``calibration'' rather than ``risk control'' terminology throughout.

\textbf{Financial risk models and backtesting.} VaR is the canonical quantile risk measure \cite{rockafellar2000,engle2004}; systematic VaR failures in stress periods are well documented \cite{berkowitz2002}, and the classical remedies filter the volatility state, as in GARCH \cite{bollerslev1986} and conditional extreme-value theory \cite{mcneilfrey2000}. Recent ML-for-finance work targets tail risk via extreme-aware distributional reinforcement learning \cite{malekzadeh2024} and richer volatility forecasters \cite{kong2025}; our calibration layer is complementary, wrapping any such forecaster with a coverage guarantee. Standard backtests evaluate unconditional coverage and independence \cite{kupiec1995,christoffersen1998}, and the dynamic quantile test adds conditioning on past information \cite{engle2004}. We adopt these tests and add regime-stratified exceedance metrics, connecting the machine-learning calibration literature to risk-model validation practice.

\section{Problem setup}
We observe covariates and portfolio losses $(x_t, y_t) \in \mathcal{X} \times \mathbb{R}$, $t = 1, \dots, T$, where $x_t$ collects information available strictly before the loss $y_t$ is realized (for daily data, $y_t = -r_t$ with $r_t$ the close-to-close return, and all components of $x_t$ use data through day $t-1$). Fix a target exceedance level $\alpha \in (0,1)$ (e.g., $\alpha = 0.01$ for 99\% VaR). The object of estimation is a one-sided VaR bound: a data-driven function $U_t(\cdot)$ issued at the start of day $t$ such that
\begin{equation}
\mathbb{P}\big(y_t \le U_t(x_t)\big) \ge 1 - \alpha .
\label{eq:target}
\end{equation}
Under nonstationarity the distribution of $(x_t, y_t)$ varies with $t$, so we evaluate procedures by how closely realized exceedance rates track $\alpha$ over time \emph{and within market regimes}: for a regime feature $z_t = g(x_t) \in \mathbb{R}^d$ (below: realized volatility and mean absolute return) and bins $B_1,\dots,B_K$ of its range (below: volatility quintiles), we report binned exceedance rates $\widehat{\mathbb{P}}(y_t > U_t \mid z_t \in B_k)$, a sequential, kernel-smoothed relative of group-conditional coverage \cite{ding2023,guan2023}. The map $g$ is a fixed, observable feature transform---\emph{not} a latent regime label assumed known; we return to this distinction in Section~\ref{sec:limits}.

\textbf{Base forecaster and scores.} Let $\hat{q}_t = f_t(x_t)$ be any forecast of the conditional $(1-\alpha)$-quantile of $y_t$ from a model trained on past data (we use historical simulation, GARCH-$t$, and gradient-boosting quantile regression). Conformal calibration corrects the base forecast additively, $U_t = \hat{q}_t + \hat{c}_t$, where the buffer $\hat{c}_t$ is calibrated from past one-sided conformity scores
\begin{equation}
s_i := y_i - \hat{q}_i , \qquad i < t,
\label{eq:score}
\end{equation}
so that large positive $s_i$ records that the base model underpredicted tail risk at time $i$.

\section{Method: regime-weighted conformal calibration}
\label{sec:method}
At time $t$, let $\mathcal{I}_t = \{\max(1, t-m), \dots, t-1\}$ be the calibration window ($m$ = buffer size). RWC assigns each $i \in \mathcal{I}_t$ the weight
\begin{equation}
w_i(t) \;=\; \underbrace{e^{-\lambda (t-i)}}_{\text{recency}} \;\cdot\; \underbrace{\exp\!\Big( -\tfrac{\lVert z_i - z_t \rVert^2}{2h^2} \Big)}_{\text{regime similarity}} ,
\label{eq:weights}
\end{equation}
with decay rate $\lambda \ge 0$ and kernel bandwidth $h > 0$; the current point receives $w_t(t) = e^{0} K_h(z_t, z_t) = 1$, consistent with \eqref{eq:weights} evaluated at $i = t$. Coordinates of $z_t$ are standardized using pre-validation statistics, so $h$ is in standard-deviation units. Setting $h = \infty$ (kernel $\equiv 1$) recovers TWC; additionally setting $\lambda = 0$ recovers a sliding-window conformal (SWC) baseline.

\textbf{Corrected weighted quantile.} Let $W_t = \sum_{i \in \mathcal{I}_t} w_i(t)$ and $\tilde{w}_i(t) = w_i(t)/W_t$. RWC sets
\begin{equation}
\hat{c}_t = Q^{\tilde{w}(t)}_{\rho_t}\big(\{ s_i \}_{i \in \mathcal{I}_t}\big),
\qquad
\rho_t = \min\!\Big\{ 1, \, (1-\alpha)\Big(1 + \tfrac{1}{W_t}\Big) \Big\},
\label{eq:quantile}
\end{equation}
the weighted $\rho_t$-quantile of past scores (smallest $c$ whose cumulative weight reaches $\rho_t$). The inflation of the level from $1-\alpha$ to $\rho_t$ is the weighted analogue of the $(n+1)$ correction in split conformal prediction \cite{tibshirani2019} and is used \emph{throughout}---in the theory and in every experiment. Finally $U_t = \hat{q}_t + \hat{c}_t$.

\textbf{Effective sample size safeguard.} Kernel localization concentrates weight and can starve the calibration set. With $n_{\mathrm{eff}}(t) = 1/\sum_i \tilde{w}_i(t)^2$, if $n_{\mathrm{eff}}(t) < n_{\min}$ we drop the kernel for that step (fall back to TWC weights). We report $n_{\mathrm{eff}}$ and the effective memory $\tau_t = \sum_i \tilde{w}_i(t)\,(t-i)$ as diagnostics: they are the empirical quantities that appear in the variance and drift-bias terms of Theorem~\ref{thm:gap}.

\begin{algorithm}[t]
\caption{RWC: regime-weighted conformal VaR calibration}
\label{alg:rwc}
\begin{algorithmic}[1]
\Require target $\alpha$; base forecaster $f_t$; window $m$; decay $\lambda$; bandwidth $h$; regime map $g$; ESS floor $n_{\min}$
\For{$t = t_0, t_0+1, \dots$}
  \State observe $x_t$; set $z_t = g(x_t)$; predict $\hat{q}_t = f_t(x_t)$
  \State compute weights $w_i(t)$ by \eqref{eq:weights} for $i \in \mathcal{I}_t$
  \If{$n_{\mathrm{eff}}(t) < n_{\min}$} drop kernel (TWC weights) \EndIf
  \State $\hat{c}_t \gets$ corrected weighted quantile \eqref{eq:quantile}; issue $U_t = \hat{q}_t + \hat{c}_t$
  \State observe $y_t$; append score $s_t = y_t - \hat{q}_t$ and $z_t$ to buffers
\EndFor
\end{algorithmic}
\end{algorithm}

\section{Theory: coverage with heuristic weights}
\label{sec:theory}
Classical weighted conformal analysis assumes \emph{weighted exchangeability}: the joint density of the calibration-plus-test collection factorizes as a weight function times a symmetric function, with the algorithm's weights equal to the true density ratios \cite{tibshirani2019}. For recency--similarity weights such as \eqref{eq:weights} this premise is circular---it amounts to assuming the data obey exactly the localization the algorithm imposes---so we do not adopt it as our working assumption. We state the classical result once as a benchmark, then give our main guarantee, which holds for \emph{arbitrary} weights.

\begin{proposition}[Benchmark; \citealp{tibshirani2019}]
\label{prop:wex}
Fix $t$ and condition on $(x_t, z_t)$. If $\{(s_i, w_i(t))\}_{i \in \mathcal{I}_t} \cup \{(s_t, 1)\}$ is weighted exchangeable, then $U_t$ from \eqref{eq:quantile} satisfies $\mathbb{P}(y_t \le U_t \mid x_t) \ge 1-\alpha$.
\end{proposition}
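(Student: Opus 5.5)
The proof will follow the weighted-exchangeability argument of \citet{tibshirani2019}, adapted to the one-sided score and to the normalization used in \eqref{eq:quantile}. Write $n = |\mathcal{I}_t|$ and $V_1,\dots,V_{n+1}$ for the scores $\{s_i\}_{i\in\mathcal{I}_t}$ together with the test score $s_t = y_t - \hat q_t$. Conditionally on $x_t$ (hence on $\hat q_t$ and $z_t$), the event $y_t \le U_t$ is exactly $s_t \le \hat c_t$. So it suffices to show
\[
\mathbb{P}\bigl(s_t \le Q^{\tilde w(t)}_{\rho_t}(\{s_i\}_{i\in\mathcal{I}_t}) \mid x_t\bigr) \ge 1-\alpha .
\]
Treat the weights $w_i(t)$ as fixed given the conditioning; by weighted exchangeability they are the weight functions of the joint law. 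The test point carries weight $1$.

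\textbf{Step 1 (oracle augmented quantile).} Let $E$ be the event that the multiset $\{V_1,\dots,V_{n+1}\} = \{v_1,\dots,v_{n+1}\}$. Weighted exchangeability means the joint density is $\prod_j w_j(v_j)\cdot g(v_1,\dots,v_{n+1})$ with $g$ symmetric. Summing over permutations then gives
\[
\mathbb{P}(V_{n+1}=v_k \mid E) = p_k := \frac{w_k}{W_t+1}, \qquad k=1,\dots,n+1,
\]
where the test weight is $w_{n+1}=1$. Hence, conditionally on $E$, $s_t$ is distributed as $\sum_k p_k\delta_{v_k}$. By the definition of a quantile of a discrete law,
\[
\mathbb{P}\Bigl(s_t \le Q_{1-\alpha}\bigl(\textstyle\sum_{k} p_k\delta_{v_k}\bigr) \,\Big|\, E\Bigr) \ge 1-\alpha .
\]
Averaging over $E$ gives the same inequality marginally, conditional on $x_t$.

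\textbf{Step 2 (replace the oracle by the computable quantile).} A standard exchange argument shows that
\[
s_t \le Q_{1-\alpha}\Bigl(\sum_{i\in\mathcal{I}_t} p_i\delta_{s_i} + p_{n+1}\delta_{s_t}\Bigr)
\iff
s_t \le Q_{1-\alpha}\Bigl(\sum_{i\in\mathcal{I}_t} p_i\delta_{s_i} + p_{n+1}\delta_{+\infty}\Bigr).
\]
Only the forward implication is needed, and it is immediate, because moving the test mass to $+\infty$ can only raise the quantile. Next I relate the right-hand quantile to \eqref{eq:quantile}. The finite part has total mass $W_t/(W_t+1)$ and is proportional to $\tilde w(t)$. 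The $(1-\alpha)$-quantile of the augmented law is therefore the smallest $c$ satisfying
\[
\frac{W_t}{W_t+1}\sum_i \tilde w_i \mathbf{1}\{s_i\le c\} \ge 1-\alpha,
\]
that is, $\sum_i \tilde w_i\mathbf{1}\{s_i\le c\}\ge (1-\alpha)(1+1/W_t)$. This is exactly $Q^{\tilde w(t)}_{\rho_t}$. When $\rho_t=1$ is capped, the quantile is $+\infty$ in the augmented form but $\max_i s_i$ in \eqref{eq:quantile}. I would handle this case by adopting the convention that the quantile equals $+\infty$ when the required level exceeds the attainable mass, or else note it explicitly.

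\textbf{Main obstacle.} The delicate step is Step 1's permutation computation and the matching of the paper's capped level $\rho_t$ with the augmented quantile. The permutation computation is standard (Lemma 3 of \citealp{tibshirani2019}), so I would cite it. The real care goes into the boundary case, which I would settle by the $+\infty$ convention. A second subtlety is whether the $w_i(t)$, which depend on $z_i$, may be treated as fixed. I would condition on the regime features $\{z_i\}$ as part of the weighted-exchangeability hypothesis, so that they are absorbed into the weight functions.
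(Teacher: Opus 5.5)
The paper gives no proof beyond the citation to \citet{tibshirani2019}, so replaying that argument is the intended route. Two parts of your proposal are correct. The first is the Step~2 identification of the $+\infty$-augmented quantile with $Q^{\tilde w(t)}_{\rho_t}$. The second is your boundary remark, which exposes a real defect of the statement rather than a matter of convention: when $W_t<(1-\alpha)/\alpha$ the cap yields $\hat c_t=\max_i s_i$, and for i.i.d.\ continuous scores with unit weights ($W_t=n$) this covers with probability $n/(n+1)<1-\alpha$.

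The gap is in Step~1. You take the $w_i(t)$ as fixed constants and call them the weight functions of the joint law. With constant $w_j$, however, the factorization $\prod_j w_j\cdot g$ with $g$ symmetric is plain exchangeability. Summing over permutations then gives $\mathbb{P}(V_{n+1}=v_k\mid E)=1/(n+1)$ for distinct values, not $w_k/(W_t+1)$. The lemma you cite returns, in general, $\sum_{\sigma:\sigma(n+1)=k}\prod_j w_j(v_{\sigma(j)})\big/\sum_{\sigma}\prod_j w_j(v_{\sigma(j)})$. This reduces to $w_k/\sum_j w_j$ only in the covariate-shift configuration: the calibration weight functions are identically $1$, and the numbers $w_k$ are values of a single likelihood ratio (the test index's weight function) evaluated at the data points. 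That configuration has two further requirements. $E$ must be the multiset of pairs $(s_i,z_i)$ rather than of scores alone, since otherwise $\sum_k p_k\delta_{v_k}$ is not even a function of $E$. And it cannot absorb the recency factor $e^{-\lambda(t-i)}$, which depends on the index, not on any data value. Your resolution of the ``second subtlety'' (conditioning on all $z_i$ so the weights become constants) lands exactly in the degenerate case.

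A concrete check confirms the problem. Under your reading the claim is false for fixed weights exceeding the test weight: take $n=2$, weights $(3.5,0.5)$, $\alpha=0.35$. Then $\rho_t=0.8125$, $\hat c_t=s_1$ always, and i.i.d.\ continuous scores give coverage $1/2$. Yet your argument never uses $w_i(t)\le 1$, so as written it would prove this false case too.

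You can repair it in one of two ways.
\begin{itemize}
\item[(a)] Take the hypothesis to be what the paper means by the algorithm's weights being the true density ratios: conditionally on the multiset of pairs, the test pair is the $k$-th with probability $\bar w_k(t)$, where $\bar w_t(t)=1/(W_t+1)$. Step~1 then becomes the assumption itself, and Step~2 finishes the proof.
\item[(b)] Keep fixed index-attached weights, assume $S\stackrel{d}{=}S^{(i)}$ for each $i$, and use the random-swap argument of \citet{barber2023}. Draw $K$ independently of $S$ with $\mathbb{P}(K=i)=\bar w_i(t)$. The indices $j$ with $s_j>Q_{1-\alpha}\bigl(\sum_l \bar w_l(t)\delta_{s_l}\bigr)$ carry total weight at most $\alpha$. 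Then use $\bar w_K(t)\le\bar w_t(t)$ to show that undercoverage of $S^{(K)}$ forces $K$ to be such an index.
\end{itemize}
Route (b) is the same machinery the paper uses for Theorem~\ref{thm:gap}, with all total-variation terms equal to zero. It is the route that genuinely fits weights of the form \eqref{eq:weights}.
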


Weighted exchangeability with respect to \eqref{eq:weights} will not hold exactly for financial data; the value of Proposition~\ref{prop:wex} is to fix the corrected level $\rho_t$. Our main result quantifies what is lost when it fails.

\begin{assumption}[Smooth regime drift in total variation]
\label{ass:tv}
Let $P_i(\cdot \mid z)$ denote the conditional law of the score $s_i$ given $z_i = z$ and the past. There exist $L_z, L_t \ge 0$ such that for all $i \le t$ and all $z, z'$ in the (bounded) support of the regime features,
\[
\dtv\big( P_i(\cdot \mid z), \, P_t(\cdot \mid z') \big) \;\le\; L_z \lVert z - z' \rVert \; + \; L_t\,(t - i).
\]
\end{assumption}

\begin{assumption}[Conditional independence given the regime path]
\label{ass:ci}
Conditional on the regime path $Z_t = (z_i)_{i \in \mathcal{I}_t \cup \{t\}}$, the scores $\{s_i\}_{i \in \mathcal{I}_t \cup \{t\}}$ are independent, with $s_i \sim P_i(\cdot \mid z_i)$.
\end{assumption}

Assumption~\ref{ass:tv} formalizes ``similar regimes and nearby dates have similar error distributions'' directly in the metric the theory consumes; it is a smoothness condition on \emph{conditional} laws and does not restrict the (arbitrarily nonstationary) path of $z_t$ itself. Assumption~\ref{ass:ci} makes the statement clean; serial dependence of scores beyond what $z$ captures weakens the result only through the concentration constants (remark below).

\begin{theorem}[Coverage gap for regime-weighted calibration]
\label{thm:gap}
Fix $t$, condition on the regime path $Z_t$, and let the weights \eqref{eq:weights} (which are $Z_t$-measurable) be arbitrary otherwise. Under Assumptions~\ref{ass:tv}--\ref{ass:ci}, the RWC bound $U_t$ from \eqref{eq:quantile} satisfies
\[
\mathbb{P}\big( y_t \le U_t \,\big|\, Z_t \big) \;\ge\; 1 - \alpha \;-\; \varepsilon_t,
\]
\[
\varepsilon_t \;=\; 2 \sum_{i \in \mathcal{I}_t} \bar{w}_i(t)\, \dtv\big( P_i(\cdot \mid z_i), P_t(\cdot \mid z_t) \big),
\]
where $\bar{w}_i(t) = w_i(t) / (W_t + 1)$. Consequently, by Assumption~\ref{ass:tv},
\begin{align*}
\varepsilon_t \;&\le\; 2 L_z \sum_{i} \bar{w}_i(t) \lVert z_i - z_t \rVert \; + \; 2 L_t \sum_{i} \bar{w}_i(t)\,(t-i) \\
&=\; O(L_z h) + O(L_t \tau_t) ,
\end{align*}
the second equality holding when the kernel concentrates its normalized mass on $\{\lVert z_i - z_t\rVert = O(h)\}$, with $\tau_t$ the effective memory of Section~\ref{sec:method}.
\end{theorem}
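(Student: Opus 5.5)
We follow the nonexchangeable conformal argument of \citet{barber2023}, adapted to fixed ($Z_t$-measurable) weights and a one-sided score. Conditioning on $Z_t$, the weights $w_i(t)$ are deterministic and, by Assumption~\ref{ass:ci}, the scores are independent with $s_i \sim P_i(\cdot\mid z_i)$. Write $n+1$ for the augmented index set $\mathcal{I}_t\cup\{t\}$, with weights $w_i(t)$ for $i\in\mathcal{I}_t$ and $w_t(t)=1$. Normalizing by $W_t+1$ gives exactly the $\bar w_i(t)$ of the statement, plus $\bar w_t(t)=1/(W_t+1)$.

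\textbf{Step 1 (deterministic reduction).} Show that the event $\{y_t > U_t\} = \{s_t > \hat c_t\}$ is contained in the event that $s_t$ lies strictly above the weighted $(1-\alpha)$-quantile of the augmented empirical distribution $\sum_{i\in\mathcal{I}_t}\bar w_i\delta_{s_i} + \bar w_t\delta_{+\infty}$. This is the standard observation that the level $\rho_t=(1-\alpha)(1+1/W_t)$ applied to the $\tilde w$-distribution equals level $1-\alpha$ applied to the augmented $\bar w$-distribution with a point mass at $+\infty$. (When $\rho_t=1$ the bound is trivial.) Consequently $\{s_t>\hat c_t\}\subseteq\{s_t \in S(\mathbf{s})\}$, where for a score vector $\mathbf{s}$ the set $S(\mathbf{s})=\{j: s_j > Q^{\bar w}_{1-\alpha}(\sum_k \bar w_k\delta_{s_k})\}$ collects the "strange" indices. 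Deterministically, $\sum_{j\in S(\mathbf s)}\bar w_j\le\alpha$, since the total weight strictly above a $(1-\alpha)$-quantile is at most $\alpha$.

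\textbf{Step 2 (swap argument).} For each $i\in\mathcal{I}_t$, let $\mathbf{s}^{(i)}$ be the score vector with $s_i$ and $s_t$ transposed. The weighted quantile of the full augmented vector is invariant to the swap, except that the fixed weights are attached to positions. To handle this, we use the \citet{barber2023} form: $\mathbb{1}\{t\in S(\mathbf{s})\} = \mathbb{1}\{i\in S(\mathbf{s}^{(i)})\}$, where $S$ is computed with the position-attached weights. This gives
\[
\mathbb{P}(t\in S(\mathbf s)) - \mathbb{P}(i \in S(\mathbf s)) \le \dtv\big(\mathbf s,\mathbf s^{(i)}\big).
\]
By independence, $\mathbf s$ and $\mathbf s^{(i)}$ differ only in the laws of the two swapped coordinates. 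A coupling/product bound therefore yields
\[
\dtv(\mathbf s,\mathbf s^{(i)}) \le \dtv(P_i(\cdot\mid z_i)\otimes P_t(\cdot\mid z_t),\,P_t(\cdot\mid z_t)\otimes P_i(\cdot\mid z_i)) \le 2\,\dtv(P_i(\cdot\mid z_i),P_t(\cdot\mid z_t)).
\]

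\textbf{Step 3 (averaging).} Multiply each swap inequality by $\bar w_i$, sum over $i$, and add the trivial term for $i=t$. Using $\sum_j\bar w_j=1$ and $\mathbb{E}\sum_j\bar w_j\mathbb{1}\{j\in S\}\le\alpha$, we obtain $\mathbb{P}(s_t>\hat c_t\mid Z_t)\le\alpha+\varepsilon_t$. The displayed upper bound on $\varepsilon_t$ then follows by inserting Assumption~\ref{ass:tv} termwise. For the $O(\cdot)$ form, note that $\sum_i\bar w_i(t-i)\le\tau_t$ because $\bar w_i\le\tilde w_i$. Likewise, $\sum_i\bar w_i\|z_i-z_t\|\le\sum_i\tilde w_i\|z_i-z_t\|=O(h)$ under the stated concentration hypothesis.

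The main obstacle is Step 2 with position-attached, non-symmetric weights: the swap invariance $\mathbb 1\{t\in S(\mathbf s)\}=\mathbb 1\{i\in S(\mathbf s^{(i)})\}$ fails for fixed unequal weights. The first thing to try is to mirror \citet{barber2023}'s randomized/weighted reduction and compare the test point with a randomly drawn index $K\sim\bar w$, treating the weights as attached to positions rather than values. If that route fails, we fall back on their general theorem directly, which already yields the bound $\sum_i\bar w_i\,\dtv(\mathbf s,\mathbf s^{(i)})$. Step 2's factor $2$ then arises from the product total-variation bound.
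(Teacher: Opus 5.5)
Your fallback (apply the fixed-weight theorem of \citet{barber2023} to the conditionally independent score vector, bound $\dtv(\mathbf s,\mathbf s^{(i)})$ by $2\,\dtv(P_i,P_t)$ through the intermediate product, then insert Assumption~\ref{ass:tv}) is exactly the paper's proof. Your attempt to make Step~2 self-contained, however, rests on the event identity $\{t\in S(\mathbf s)\}=\{i\in S(\mathbf s^{(i)})\}$. As you suspect, this is false for unequal weights, and randomizing over $K\sim\bar w$ does not by itself repair it. The missing idea is that you only need the one-sided inclusion $\{t\in S(\mathbf s)\}\subseteq\{i\in S(\mathbf s^{(i)})\}$. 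It holds because the test point carries the largest weight, $w_i(t)\le 1=w_t(t)$: the decay factor and the Gaussian kernel are both at most $1$, and so are the TWC fallback weights. Indeed, suppose $s_t>q:=Q_{1-\alpha}(\sum_k\bar w_k\delta_{s_k})$. The swap moves the heavier weight $\bar w_t$ onto $s_i$ and the lighter weight $\bar w_i$ onto $s_t>q$, so the mass of $(-\infty,q]$ can only increase. The swapped quantile is therefore at most $q<s_t=(\mathbf s^{(i)})_i$. Hence $\mathbb{P}(t\in S(\mathbf s))\le\mathbb{P}(i\in S(\mathbf s^{(i)}))\le\mathbb{P}(i\in S(\mathbf s))+\dtv(\mathbf s,\mathbf s^{(i)})$, which is precisely the inequality you display, and your Step~3 then closes the argument. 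This domination is the role of the normalization $w_i\in[0,1]$, $w_{n+1}=1$ in \citet{barber2023}, so it is also the hypothesis you must check before citing their theorem. With it, your Steps~1--3 are a self-contained version of what the paper cites, not a different route.

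Separately, your remark that the case $\rho_t=1$ is trivial is wrong under the paper's definitions, and the paper's proof has the same blind spot. With the cap in \eqref{eq:quantile}, $\rho_t=1$ gives $\hat c_t=\max_{i\in\mathcal I_t}s_i$, which is finite. By contrast, the $+\infty$-augmented quantile in your Step~1, which is the object Barber et al.'s theorem controls, equals $+\infty$ precisely when $(1-\alpha)(1+1/W_t)>1$. So Step~1's reduction holds only in the uncapped case, and in the capped case the claimed bound can actually fail. Take $\lambda=0$, $h=\infty$, $|\mathcal I_t|=50$, $\alpha=0.01$ and i.i.d.\ continuous scores: then $\varepsilon_t=0$ but $\mathbb{P}(y_t>U_t\mid Z_t)=1/51>\alpha$. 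To close this gap you must either set $\hat c_t=+\infty$ whenever $(1-\alpha)(1+1/W_t)>1$, or assume $W_t\ge(1-\alpha)/\alpha$.
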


\begin{proof}
Conditional on $Z_t$ the weights are fixed constants, and by Assumption~\ref{ass:ci} the scores are independent draws from $P_i(\cdot\mid z_i)$. Theorem~2a of \citet{barber2023} applied to the one-sided score sequence $S = (s_i)_{i\in\mathcal{I}_t\cup\{t\}}$ with fixed weights $\{w_i(t)\}\cup\{1\}$ gives
$\mathbb{P}(y_t \le U_t \mid Z_t) \ge 1-\alpha - \sum_i \bar{w}_i(t)\, \dtv(S, S^{(i)})$,
where $S^{(i)}$ is the sequence with entries $i$ and $t$ swapped. Under independence, $S$ and $S^{(i)}$ are product laws that differ only in coordinates $i$ and $t$, where they carry $P_i \otimes P_t$ and $P_t \otimes P_i$ respectively (writing $P_i = P_i(\cdot\mid z_i)$, $P_t = P_t(\cdot\mid z_t)$; the common factors cancel). Two triangle inequalities through the intermediate law $P_t \otimes P_t$ give
\begin{align*}
\dtv\big(P_i \otimes P_t, \, P_t \otimes P_i\big)
&\le \dtv\big(P_i \otimes P_t, P_t \otimes P_t\big) \\
&\quad + \dtv\big(P_t \otimes P_t, P_t \otimes P_i\big)
= 2\,\dtv(P_i, P_t),
\end{align*}
using that $\dtv(P\otimes R, Q\otimes R) = \dtv(P, Q)$. (The factor $2$ is not removable in general: for $P_i$ uniform on $\{1,2\}$ and $P_t$ uniform on $\{2,3\}$, $\dtv(P_i, P_t) = \tfrac12$ while the swapped products are at distance $\tfrac34$.) Applying Assumption~\ref{ass:tv} to each $\dtv(P_i, P_t)$ term and, for the final display, the kernel concentration of normalized weights and the definition of $\tau_t$, yields the stated bounds.
\end{proof}

\textbf{Reading the bound.} The gap $\varepsilon_t$ decomposes into a \emph{regime-mismatch} term $O(L_z h)$, shrunk by localizing the kernel, and a \emph{drift} term $L_t \tau_t$, shrunk by recency decay; both are computable up to constants from the diagnostics $(n_{\mathrm{eff}}, \tau_t)$ we report. Localization is not free: shrinking $h$ or raising $\lambda$ reduces $n_{\mathrm{eff}}(t)$ and inflates the finite-sample noise of the weighted quantile, which is $O(n_{\mathrm{eff}}(t)^{-1/2})$ for weighted averages of bounded indicators under Assumption~\ref{ass:ci} (and degrades gracefully under mixing in place of independence). This bias--variance tradeoff is exactly what the ESS safeguard controls and what the bandwidth sweep in Section~\ref{sec:experiments} traces empirically.

\textbf{What the theory does and does not claim.} Theorem~\ref{thm:gap} conditions on the regime path---a regime-conditional statement aligned with the binned exceedance metrics we report---but the gap $\varepsilon_t$ does not vanish for fixed $(\lambda, h)$: RWC does not inherit the distribution-free long-run marginal coverage that feedback methods (ACI, DtACI, PID) achieve by construction \cite{gibbs2021,gibbs2024,angelopoulos2023}, a consequence of the impossibility of exact distribution-free conditional coverage \cite{barber2021}. The empirical question is therefore whether ex-ante localization buys regime-conditional stability that ex-post feedback does not, and at what capital cost; Section~\ref{sec:experiments} answers with direct comparisons.

\section{Experiments}
\label{sec:experiments}
\textbf{Data.} We use daily U.S. equity data, 1990-03-30 to 2024-12-31 (8{,}755 trading days). The headline series is the CRSP value-weighted market index (via WRDS), the benchmark survivorship-bias-free record in empirical asset pricing. To address breadth beyond a single index, we add sixteen CRSP-derived daily portfolios from Ken French's data library: ten value-weighted industry portfolios and six size/book-to-market (2$\times$3) portfolios. Losses are $y_t = -r_t$. Splits are chronological: training to 2011-01-31, validation 2011-02-01 to 2018-01-16 (hyperparameter tuning), test 2018-01-17 to 2024-12-31 ($N = 1{,}751$ days spanning the 2018 volatility spike, the 2020 COVID crash, the 2022 tightening cycle, and the 2023--24 recovery).

\textbf{Base forecasters.} (i) \emph{HS}: rolling 500-day historical simulation; (ii) \emph{GARCH}: GARCH(1,1) \cite{bollerslev1986} with Student-$t$ innovations, refit every 21 days on a rolling 10-year window; (iii) \emph{GBDT}: gradient-boosting quantile regression (LightGBM \cite{ke2017}) on lagged return, volatility, and range features, refit every 21 days. The three span the adaptivity spectrum from static to flexible, which turns out to be the key moderator of regime weighting's value.

\textbf{Calibrators.} SWC (sliding window), TWC, RWC (Algorithm~\ref{alg:rwc}), ACI \cite{gibbs2021}, DtACI \cite{gibbs2024} with their published candidate set of eight learning rates ($\gamma_i = 0.001 \cdot 2^{i-1}$) and default aggregation parameters, and conformal PID \cite{angelopoulos2023} (quantile tracking $\eta_t = 0.1 \hat{B}_t$ plus tangent integrator with the recommended saturation constants; no scorecaster, to keep base-model attribution clean). Regime features are $z_t = (\mathrm{RV}^{21}_t, \mathrm{MAR}^5_t)$: 21-day realized volatility and 5-day mean absolute return, both computed through day $t-1$ and standardized on pre-validation statistics. $(m, \lambda, h)$ and ACI's $\gamma$ are tuned on the validation period by minimizing $|\widehat{\mathrm{Exc}} - \alpha| + 0.5 \max(0, \widehat{\mathrm{RollMax}} - \alpha)$; DtACI and PID require no tuning. Hyperparameters tuned on the index are applied unchanged to all sixteen portfolios, mimicking deployment.

\textbf{Metrics.} Test-period exceedance rate with 95\% moving-block bootstrap confidence intervals (63-day blocks, 2{,}000 draws); average bound level (average VaR, in basis points---the capital cost of the bound); exceedance by realized-volatility quintile with the stability summaries $\mathrm{Reg\text{-}MAE} = \frac{1}{5}\sum_k |\hat{e}_k - \alpha|$ and $\mathrm{Reg\text{-}MaxDev}$; Kupiec, Christoffersen, and Engle--Manganelli dynamic-quantile (DQ) backtests \cite{kupiec1995,christoffersen1998,engle2004}.

% ---- results tables/figures inserted here ----
\begin{table}[t]
\caption{Test-period 99\% and 97.5\% VaR calibration on the CRSP value-weighted index (2018--2024, $N{=}1751$). Exc.\ = exceedance rate (\%; target $100\alpha$); VaR = average bound (bps; capital cost). $\dagger$: Kupiec unconditional-coverage rejection at 5\%; $\ddagger$: Engle--Manganelli DQ rejection at 5\%.}
\label{tab:index}
\centering\footnotesize
\setlength{\tabcolsep}{2.6pt}
\begin{tabular}{l cc cc cc}
\toprule
 & \multicolumn{2}{c}{HS} & \multicolumn{2}{c}{GARCH-$t$} & \multicolumn{2}{c}{GBDT} \\
\cmidrule(lr){2-3}\cmidrule(lr){4-5}\cmidrule(lr){6-7}
Method & Exc.\ (\%) & VaR & Exc.\ (\%) & VaR & Exc.\ (\%) & VaR \\
\midrule
\multicolumn{7}{l}{\textit{A. 99\% VaR ($\alpha=1\%$)}} \\
Base & $1.83^{\dagger\ddagger}$ & 341 & $1.94^{\dagger\ddagger}$ & 260 & $4.05^{\dagger\ddagger}$ & 198 \\
SWC & $1.71^{\dagger\ddagger}$ & 430 & $0.97$ & 329 & $0.91^{\ddagger}$ & 372 \\
ACI & $1.48^{\ddagger}$ & 457 & $1.14$ & 330 & $1.14$ & 379 \\
DtACI & $1.88^{\dagger\ddagger}$ & 325 & $1.54^{\dagger\ddagger}$ & 299 & $1.66^{\dagger\ddagger}$ & 299 \\
PID & $1.48^{\ddagger}$ & 365 & $1.09^{\ddagger}$ & 342 & $1.26^{\ddagger}$ & 308 \\
TWC & $0.86^{\ddagger}$ & 493 & $0.86$ & 325 & $0.91^{\ddagger}$ & 375 \\
RWC & $0.63^{\ddagger}$ & 521 & $0.63$ & 381 & $0.80^{\ddagger}$ & 313 \\
\midrule
\multicolumn{7}{l}{\textit{B. 97.5\% VaR ($\alpha=2.5\%$)}} \\
Base & $3.71^{\dagger\ddagger}$ & 242 & $3.88^{\dagger\ddagger}$ & 200 & $6.80^{\dagger\ddagger}$ & 165 \\
SWC & $2.68^{\ddagger}$ & 272 & $2.57^{\ddagger}$ & 242 & $2.40^{\ddagger}$ & 241 \\
ACI & $3.14^{\ddagger}$ & 269 & $2.68^{\ddagger}$ & 245 & $2.63^{\ddagger}$ & 269 \\
DtACI & $3.54^{\dagger\ddagger}$ & 246 & $2.97^{\ddagger}$ & 233 & $2.91^{\ddagger}$ & 230 \\
PID & $3.08^{\ddagger}$ & 248 & $2.86^{\ddagger}$ & 253 & $2.74^{\ddagger}$ & 255 \\
TWC & $2.46^{\ddagger}$ & 307 & $2.40^{\ddagger}$ & 251 & $2.06^{\ddagger}$ & 261 \\
RWC & $2.06^{\ddagger}$ & 340 & $2.34$ & 258 & $2.51^{\ddagger}$ & 231 \\
\bottomrule
\end{tabular}
\end{table}
\begin{table}[t]
\caption{Exceedance rate (\%) by realized-volatility quintile (Q1 calm $\to$ Q5 stress), 99\% VaR, CRSP index test period. Reg-MAE = mean absolute deviation from the 1\% target across quintiles (pp).}
\label{tab:regime}
\centering\footnotesize
\setlength{\tabcolsep}{3.6pt}
\begin{tabular}{l rrrrr r}
\toprule
Method & Q1 & Q2 & Q3 & Q4 & Q5 & Reg-MAE \\
\midrule
\multicolumn{7}{l}{\textit{HS base}} \\
SWC & 1.71 & 0.57 & 0.86 & 2.29 & 3.14 & 0.94 \\
ACI & 1.71 & 0.57 & 0.57 & 2.29 & 2.29 & 0.83 \\
DtACI & 3.13 & 1.14 & 0.57 & 2.57 & 2.00 & 1.06 \\
PID & 2.28 & 0.57 & 0.86 & 2.00 & 1.71 & 0.71 \\
TWC & 0.85 & 0.57 & 0.29 & 1.43 & 1.14 & 0.37 \\
RWC & 1.42 & 0.57 & 0.00 & 0.29 & 0.86 & 0.54 \\
\midrule
\multicolumn{7}{l}{\textit{GARCH-$t$ base}} \\
SWC & 2.85 & 0.86 & 0.29 & 0.29 & 0.57 & 0.77 \\
ACI & 2.85 & 1.43 & 0.29 & 0.86 & 0.29 & 0.77 \\
DtACI & 3.99 & 1.14 & 0.57 & 1.43 & 0.57 & 0.88 \\
PID & 2.56 & 1.14 & 0.29 & 0.86 & 0.57 & 0.60 \\
TWC & 2.28 & 0.86 & 0.29 & 0.29 & 0.57 & 0.66 \\
RWC & 1.42 & 0.57 & 0.29 & 0.29 & 0.57 & 0.54 \\
\midrule
\multicolumn{7}{l}{\textit{GBDT base}} \\
SWC & 1.42 & 0.86 & 0.57 & 0.57 & 1.14 & 0.31 \\
ACI & 1.71 & 1.14 & 0.86 & 0.86 & 1.14 & 0.26 \\
DtACI & 3.42 & 1.14 & 0.86 & 1.43 & 1.43 & 0.71 \\
PID & 2.85 & 0.86 & 0.57 & 0.86 & 1.14 & 0.54 \\
TWC & 1.42 & 0.86 & 0.57 & 0.57 & 1.14 & 0.31 \\
RWC & 1.14 & 0.57 & 0.57 & 0.57 & 1.14 & 0.31 \\
\bottomrule
\end{tabular}
\end{table}
\begin{figure*}[t]
\centering
\includegraphics[width=0.92\textwidth]{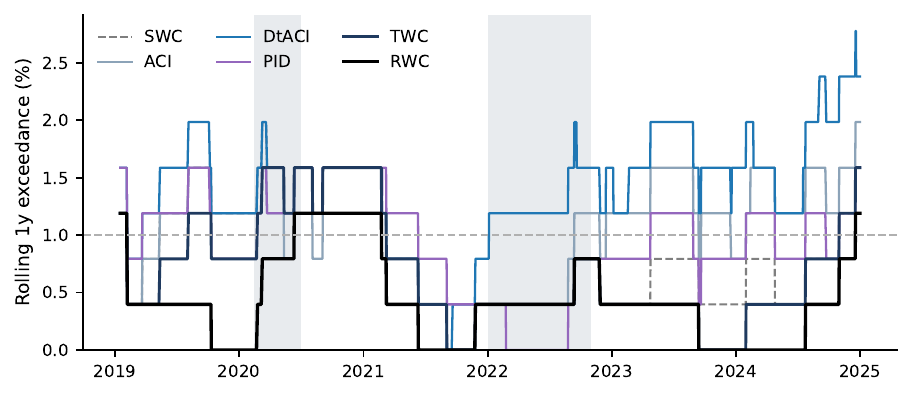}
\caption{Rolling one-year exceedance rate, 99\% VaR on the CRSP value-weighted index (GARCH base). Dashed line: 1\% target. Shading: COVID-19 crash and 2022 tightening cycle.}
\label{fig:rolling}
\end{figure*}
\begin{table*}[t]
\caption{Cross-sectional results: sixteen CRSP-derived portfolios (ten industries, six size/book-to-market), test period, hyperparameters tuned once on the index. $|$Exc$-\alpha|$ = mean absolute calibration error across portfolios (pp); Q5 = mean stress-quintile exceedance (\%); VaR = mean average bound (bps); DQ = number of portfolios (of 16) passing the dynamic-quantile backtest at 5\%.}
\label{tab:panel}
\centering\small
\setlength{\tabcolsep}{6.5pt}
\begin{tabular}{l rrrr rrrr rrrr}
\toprule
 & \multicolumn{4}{c}{HS} & \multicolumn{4}{c}{GARCH-$t$} & \multicolumn{4}{c}{GBDT} \\
\cmidrule(lr){2-5}\cmidrule(lr){6-9}\cmidrule(lr){10-13}
Method & $|$E$-\alpha|$ & Q5 & VaR & DQ & $|$E$-\alpha|$ & Q5 & VaR & DQ & $|$E$-\alpha|$ & Q5 & VaR & DQ \\
\midrule
\multicolumn{13}{l}{\textit{A. 99\% VaR}} \\
Base & 0.68 & 3.55 & 389 & 0 & 0.72 & 0.89 & 323 & 4 & 3.32 & 4.93 & 244 & 0 \\
SWC & 0.53 & 2.36 & 492 & 0 & 0.21 & 0.57 & 403 & 14 & 0.11 & 1.20 & 445 & 2 \\
ACI & 0.33 & 1.64 & 521 & 0 & 0.11 & 0.54 & 399 & 13 & 0.10 & 1.12 & 450 & 5 \\
DtACI & 0.83 & 1.71 & 381 & 0 & 0.56 & 0.64 & 357 & 2 & 0.63 & 1.30 & 366 & 2 \\
PID & 0.34 & 1.71 & 434 & 0 & 0.13 & 0.59 & 390 & 10 & 0.17 & 1.30 & 387 & 6 \\
TWC & 0.22 & 1.21 & 554 & 0 & 0.16 & 0.59 & 388 & 13 & 0.14 & 1.04 & 447 & 6 \\
RWC & 0.40 & 0.82 & 627 & 1 & 0.47 & 0.45 & 463 & 13 & 0.10 & 1.30 & 392 & 4 \\
\midrule
\multicolumn{13}{l}{\textit{B. 97.5\% VaR}} \\
Base & 0.92 & 6.70 & 285 & 0 & 1.24 & 2.34 & 254 & 0 & 4.16 & 7.38 & 207 & 0 \\
SWC & 0.14 & 4.61 & 322 & 0 & 0.12 & 1.61 & 296 & 10 & 0.19 & 2.77 & 299 & 2 \\
ACI & 0.41 & 2.05 & 323 & 0 & 0.11 & 1.48 & 298 & 9 & 0.09 & 1.86 & 329 & 2 \\
DtACI & 0.71 & 3.12 & 290 & 0 & 0.45 & 1.46 & 286 & 6 & 0.48 & 2.20 & 291 & 5 \\
PID & 0.41 & 2.55 & 303 & 0 & 0.24 & 1.27 & 308 & 2 & 0.26 & 1.82 & 311 & 1 \\
TWC & 0.45 & 2.29 & 363 & 0 & 0.16 & 1.46 & 303 & 12 & 0.30 & 2.09 & 323 & 6 \\
RWC & 0.78 & 1.91 & 417 & 0 & 0.32 & 1.27 & 315 & 13 & 0.20 & 2.86 & 291 & 2 \\
\bottomrule
\end{tabular}
\end{table*}
\begin{figure*}[t]
\centering
\includegraphics[width=0.92\textwidth]{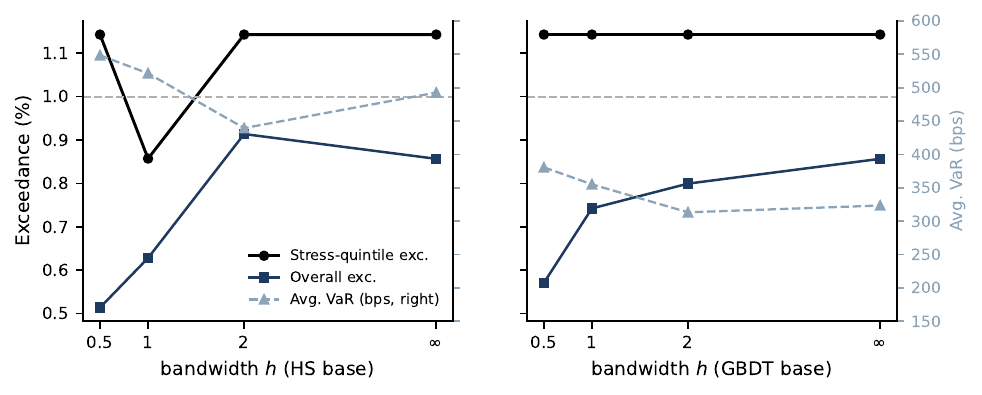}
\caption{Bandwidth sweep at tuned $(m, \lambda)$, 99\% VaR: overall and stress-quintile exceedance (left axis) and average bound (right axis) as regime localization strengthens ($h \downarrow$); $h = \infty$ is the time-weighted limit.}
\label{fig:sweep}
\end{figure*}

\subsection{Index results}
Table~\ref{tab:index} reports test-period calibration and capital cost. Uncalibrated base forecasters miss badly at the 99\% level---GBDT exceeds at $4.05\%$, GARCH-$t$ at $1.94\%$, HS at $1.83\%$ against a $1\%$ target, all rejected by Kupiec and DQ tests---and every conformal wrapper restores exceedance to the neighborhood of the target. Feedback methods track the marginal target closely (ACI: $1.14\%$ on GARCH and GBDT); weighted-quantile methods run deliberately conservative ($\mathrm{TWC}$: $0.86$--$0.91\%$; $\mathrm{RWC}$: $0.63$--$0.80\%$), the visible price of the finite-sample correction $\rho_t$ in \eqref{eq:quantile}. DtACI overshoots at the extreme tail ($1.54$--$1.89\%$, with Kupiec rejections on all three bases): with 252-day calibration windows, the $1\%$ quantile is granular and the expert aggregation reacts to noise; at $\alpha = 2.5\%$ (panel~B) it is well calibrated. Conformal PID lands between. Block-bootstrap $95\%$ intervals show the limits of single-index inference ($[0.17, 0.80]\%$ for RWC vs.\ $[0.69, 1.37]\%$ for ACI, GARCH-$t$ base), motivating the sixteen-portfolio replication of Section~\ref{sec:panel}. At $\alpha = 2.5\%$ (panel~B), the GARCH-based RWC is the only configuration that passes the DQ test ($p = 0.15$). Capital cost separates the weighted methods: on the GBDT base, RWC delivers its coverage at $313$ bps average VaR versus $375$ bps for separately tuned TWC ($17\%$ tighter) and $323$ bps for the time-weighted limit at matched $(m, \lambda)$---the kernel keeps distant-but-similar days relevant, letting the tuner exploit a longer window ($m{=}756$, median $n_{\mathrm{eff}} = 594$ vs.\ $247$ for TWC). On the HS base the sign flips (RWC $521$ vs.\ TWC $493$ bps): localization spends capital to fix stress-regime coverage that the static base model misses. Backtests add a structural lesson: with the GARCH-$t$ base, every calibrator except DtACI passes unconditional coverage and all but DtACI and PID pass the DQ test, while \emph{no} calibrator rescues the HS base from DQ rejection---violations cluster because the base cannot adapt within regimes, and an additive buffer cannot undo that. A conformal layer complements a conditional risk model; it does not substitute for one.

\subsection{Regime-conditional calibration}
Table~\ref{tab:regime} stratifies exceedances by realized-volatility quintile; this is where ex-ante localization and ex-post feedback separate. On the HS base, feedback methods restore \emph{average} coverage by construction but concentrate their errors in stress: top-quintile exceedance is $2.29\%$ (ACI), $2.00\%$ (DtACI), and $1.71\%$ (PID), against $1.14\%$ for TWC and $0.86\%$ for RWC. The mechanism is visible in Figure~\ref{fig:rolling}: feedback methods widen only \emph{after} a run of violations, so each regime transition is paid for in stress-period exceedances, while recency- and regime-weighted calibration repositions the buffer as soon as the regime features move. On the GARCH-$t$ base the base model itself tracks volatility, all methods keep the stress quintile at or below target ($0.29$--$0.57\%$), and regime weighting is unnecessary---consistent with Theorem~\ref{thm:gap}, since the score distribution is then nearly regime-homogeneous ($L_z$ small) and localization only costs effective sample size. On the GBDT base a residual stress undercoverage ($1.14\%$) is shared by all weighted methods; RWC's contribution there is not coverage but capital, as above. Across bases, TWC or RWC attain the best regime-stability summaries on HS (Reg-MAE $0.37$/$0.54$ pp), while ACI is best on GBDT ($0.26$ pp): when the base forecaster already absorbs regime structure, lightweight feedback is enough; when it does not, ex-ante localization is what keeps stress-period risk honest.

\subsection{Cross-sectional evidence: sixteen portfolios}
\label{sec:panel}
Table~\ref{tab:panel} asks whether the index findings survive contact with a cross-section: sixteen portfolios, hyperparameters frozen at the index-tuned values (no per-portfolio tuning). They do, on all three fronts. \emph{Stress calibration:} on the HS base, mean stress-quintile exceedance is $0.82\%$ for RWC versus $1.21\%$ (TWC), $1.64\%$ (ACI), $1.71\%$ (DtACI and PID), and $2.36\%$ (SWC)---ex-ante localization halves stress-period miscoverage relative to feedback methods, at a capital premium ($627$ vs.\ $554$ bps for TWC). \emph{Capital efficiency:} on the GBDT base, RWC attains the (tied-)best mean calibration error ($0.10$ pp) at the lowest average bound among well-calibrated methods ($392$ bps vs.\ $447$ for TWC and $450$ for ACI), a $12\%$ capital saving that compounds across a book. \emph{Backtest quality:} with the GARCH-$t$ base, SWC/TWC/RWC/ACI pass the DQ test on $13$--$14$ of $16$ portfolios; with the HS base, at most one portfolio passes under \emph{any} calibrator, and DtACI's index-level overshoot persists ($0.56$--$0.83$ pp mean error, $\le 2$ DQ passes). The consistency of all three patterns under frozen hyperparameters is the deployment-relevant result: the method transfers across assets without retuning.

\subsection{Localization diagnostics and sensitivity}
Figure~\ref{fig:sweep} traces the localization--variance tradeoff of Theorem~\ref{thm:gap} empirically. On the HS base, tightening the kernel from $h = \infty$ to $h = 1$ cuts stress-quintile exceedance to $0.86\%$ while average VaR rises toward $521$ bps ($h = 0.5$: $548$ bps)---capital buys regime-mismatch bias reduction, the $O(L_z h)$ term at work. On the GBDT base, stress exceedance is flat in $h$ while capital \emph{falls} from $381$ bps ($h{=}0.5$) to $313$ bps ($h{=}2$): localization prunes stale calm-period scores without changing tail coverage. The recency rate is the more powerful knob: raising $\lambda$ from $0.002$ to $0.01$ at tuned $m$ drives the HS-base bound to $0.29\%$ exceedance at $744$ bps---rapid decay shrinks $n_{\mathrm{eff}}$, and the corrected level $\rho_t$ responds by inflating the buffer. Tuned configurations sit in a comfortable region: median $n_{\mathrm{eff}}$ between $138$ and $638$, effective memory $\tau_t$ between $73$ and $287$ days, and tenth-percentile $n_{\mathrm{eff}} \ge 87$, so the ESS safeguard binds rarely and practitioners can monitor both quantities in production.

\section{Limitations}
\label{sec:limits}
Our guarantees hold conditional on smoothness (Assumption~\ref{ass:tv}) and conditional independence (Assumption~\ref{ass:ci}) that are approximations for financial data; the coverage gap does not vanish for fixed $(\lambda, h)$, and long-run marginal coverage is better served by feedback methods, three of which we benchmark. The regime map $g$ is a fixed feature transform; learned or latent regime labels (e.g., HMM-filtered states) are a natural extension, and mixed empirical results across base forecasters suggest the value of localization depends on what the base model already captures. We study one-step-ahead risk for liquid equity portfolios; multi-day horizons, illiquid assets, and feedback from risk constraints to prices introduce dependence we do not model. CRSP is proprietary but ubiquitous in academic finance, and the portfolio data are publicly available.

\section{Conclusion}
This paper revisits sequential one-sided VaR calibration under regime-structured nonstationarity. A single weighted-conformal mechanism (exponential recency decay plus an optional regime-similarity kernel with an ESS safeguard) provides a model-agnostic calibration layer whose coverage gap is bounded for arbitrary heuristic weights, without assuming weighted exchangeability. Empirically, across three base forecasters, the index and sixteen portfolios, and two VaR levels, time-decay calibration is a robust default; regime weighting adds value where base models adapt slowly, concentrated in stress regimes, and simple diagnostics ($n_{\mathrm{eff}}$, $\tau_t$) flag when localization is trustworthy. Conformal calibration layers are a practical reliability component between forecasting models and the risk constraints they feed.

\bibliographystyle{ACM-Reference-Format}
\bibliography{references}

@book{vovk2005,
  author = {Vovk, Vladimir and Gammerman, Alexander and Shafer, Glenn},
  title = {Algorithmic Learning in a Random World},
  publisher = {Springer},
  address = {New York},
  year = {2005}
}

@inproceedings{romano2019,
  author = {Romano, Yaniv and Patterson, Evan and Cand\`es, Emmanuel J.},
  title = {Conformalized Quantile Regression},
  booktitle = {Advances in Neural Information Processing Systems},
  volume = {32},
  pages = {3538--3548},
  year = {2019}
}

@inproceedings{tibshirani2019,
  author = {Tibshirani, Ryan J. and Barber, Rina Foygel and Cand\`es, Emmanuel J. and Ramdas, Aaditya},
  title = {Conformal Prediction Under Covariate Shift},
  booktitle = {Advances in Neural Information Processing Systems},
  volume = {32},
  year = {2019}
}

@article{barber2021,
  author = {Barber, Rina Foygel and Cand\`es, Emmanuel J. and Ramdas, Aaditya and Tibshirani, Ryan J.},
  title = {The Limits of Distribution-Free Conditional Predictive Inference},
  journal = {Information and Inference: A Journal of the IMA},
  volume = {10},
  number = {2},
  pages = {455--482},
  year = {2021}
}

@article{barber2023,
  author = {Barber, Rina Foygel and Cand\`es, Emmanuel J. and Ramdas, Aaditya and Tibshirani, Ryan J.},
  title = {Conformal Prediction Beyond Exchangeability},
  journal = {The Annals of Statistics},
  volume = {51},
  number = {2},
  pages = {816--845},
  year = {2023}
}

@inproceedings{gibbs2021,
  author = {Gibbs, Isaac and Cand\`es, Emmanuel J.},
  title = {Adaptive Conformal Inference Under Distribution Shift},
  booktitle = {Advances in Neural Information Processing Systems},
  volume = {34},
  pages = {1660--1672},
  year = {2021}
}

@article{gibbs2024,
  author = {Gibbs, Isaac and Cand\`es, Emmanuel J.},
  title = {Conformal Inference for Online Prediction with Arbitrary Distribution Shifts},
  journal = {Journal of Machine Learning Research},
  volume = {25},
  number = {86},
  pages = {1--36},
  year = {2024}
}

@inproceedings{angelopoulos2023,
  author = {Angelopoulos, Anastasios N. and Cand\`es, Emmanuel J. and Tibshirani, Ryan J.},
  title = {Conformal {PID} Control for Time Series Prediction},
  booktitle = {Advances in Neural Information Processing Systems},
  volume = {36},
  pages = {23047--23074},
  year = {2023}
}

@inproceedings{angelopoulos2024,
  author = {Angelopoulos, Anastasios N. and Bates, Stephen and Fisch, Adam and Lei, Lihua and Schuster, Tal},
  title = {Conformal Risk Control},
  booktitle = {International Conference on Learning Representations},
  year = {2024}
}

@inproceedings{xu2021,
  author = {Xu, Chen and Xie, Yao},
  title = {Conformal Prediction Interval for Dynamic Time-Series},
  booktitle = {Proceedings of the 38th International Conference on Machine Learning},
  series = {PMLR},
  volume = {139},
  pages = {11559--11569},
  year = {2021}
}

@inproceedings{xu2023,
  author = {Xu, Chen and Xie, Yao},
  title = {Sequential Predictive Conformal Inference for Time Series},
  booktitle = {Proceedings of the 40th International Conference on Machine Learning},
  series = {PMLR},
  volume = {202},
  pages = {38707--38727},
  year = {2023}
}

@inproceedings{zaffran2022,
  author = {Zaffran, Margaux and F\'eron, Olivier and Goude, Yannig and Josse, Julie and Dieuleveut, Aymeric},
  title = {Adaptive Conformal Predictions for Time Series},
  booktitle = {Proceedings of the 39th International Conference on Machine Learning},
  series = {PMLR},
  volume = {162},
  pages = {25834--25866},
  year = {2022}
}

@inproceedings{bastani2022,
  author = {Bastani, Osbert and Gupta, Varun and Jung, Christopher and Noarov, Georgy and Ramalingam, Ramya and Roth, Aaron},
  title = {Practical Adversarial Multivalid Conformal Prediction},
  booktitle = {Advances in Neural Information Processing Systems},
  volume = {35},
  pages = {29362--29373},
  year = {2022}
}

@article{guan2023,
  author = {Guan, Leying},
  title = {Localized Conformal Prediction: A Generalized Inference Framework for Conformal Prediction},
  journal = {Biometrika},
  volume = {110},
  number = {1},
  pages = {33--50},
  year = {2023}
}

@inproceedings{ding2023,
  author = {Ding, Tiffany and H\'ebert-Johnson, Ursula and Wang, Ruoxi and Tibshirani, Ryan J.},
  title = {Class-Conditional Conformal Prediction with Many Classes},
  booktitle = {Advances in Neural Information Processing Systems},
  volume = {36},
  year = {2023}
}

@article{hamilton1989,
  author = {Hamilton, James D.},
  title = {A New Approach to the Economic Analysis of Nonstationary Time Series and the Business Cycle},
  journal = {Econometrica},
  volume = {57},
  number = {2},
  pages = {357--384},
  year = {1989}
}

@article{gray1996,
  author = {Gray, Stephen F.},
  title = {Modeling the Conditional Distribution of Interest Rates as a Regime-Switching Process},
  journal = {Journal of Financial Economics},
  volume = {42},
  number = {1},
  pages = {27--62},
  year = {1996}
}

@article{berkowitz2002,
  author = {Berkowitz, Jeremy and O'Brien, James},
  title = {How Accurate Are Value-at-Risk Models at Commercial Banks?},
  journal = {The Journal of Finance},
  volume = {57},
  number = {3},
  pages = {1093--1111},
  year = {2002}
}

@article{basak2001,
  author = {Basak, Suleyman and Shapiro, Alexander},
  title = {Value-at-Risk-Based Risk Management: Optimal Policies and Asset Prices},
  journal = {The Review of Financial Studies},
  volume = {14},
  number = {2},
  pages = {371--405},
  year = {2001}
}

@article{adrian2014,
  author = {Adrian, Tobias and Shin, Hyun Song},
  title = {Procyclical Leverage and Value-at-Risk},
  journal = {The Review of Financial Studies},
  volume = {27},
  number = {2},
  pages = {373--403},
  year = {2014}
}

@article{rockafellar2000,
  author = {Rockafellar, R. Tyrrell and Uryasev, Stanislav},
  title = {Optimization of Conditional Value-at-Risk},
  journal = {Journal of Risk},
  volume = {2},
  number = {3},
  pages = {21--41},
  year = {2000}
}

@article{engle2004,
  author = {Engle, Robert F. and Manganelli, Simone},
  title = {{CAViaR}: Conditional Autoregressive Value at Risk by Regression Quantiles},
  journal = {Journal of Business \& Economic Statistics},
  volume = {22},
  number = {4},
  pages = {367--381},
  year = {2004}
}

@article{kupiec1995,
  author = {Kupiec, Paul H.},
  title = {Techniques for Verifying the Accuracy of Risk Measurement Models},
  journal = {The Journal of Derivatives},
  volume = {3},
  number = {2},
  pages = {73--84},
  year = {1995}
}

@article{christoffersen1998,
  author = {Christoffersen, Peter F.},
  title = {Evaluating Interval Forecasts},
  journal = {International Economic Review},
  volume = {39},
  number = {4},
  pages = {841--862},
  year = {1998}
}

@article{bollerslev1986,
  author = {Bollerslev, Tim},
  title = {Generalized Autoregressive Conditional Heteroskedasticity},
  journal = {Journal of Econometrics},
  volume = {31},
  number = {3},
  pages = {307--327},
  year = {1986}
}

@article{mcneilfrey2000,
  author = {McNeil, Alexander J. and Frey, R{\"u}diger},
  title = {Estimation of Tail-Related Risk Measures for Heteroscedastic Financial Time Series: An Extreme Value Approach},
  journal = {Journal of Empirical Finance},
  volume = {7},
  number = {3--4},
  pages = {271--300},
  year = {2000}
}

@techreport{bcbs2019,
  author = {{Basel Committee on Banking Supervision}},
  title = {Minimum Capital Requirements for Market Risk},
  institution = {Bank for International Settlements},
  year = {2019}
}

@inproceedings{ke2017,
  author = {Ke, Guolin and Meng, Qi and Finley, Thomas and Wang, Taifeng and Chen, Wei and Ma, Weidong and Ye, Qiwei and Liu, Tie-Yan},
  title = {{LightGBM}: A Highly Efficient Gradient Boosting Decision Tree},
  booktitle = {Advances in Neural Information Processing Systems},
  volume = {30},
  year = {2017}
}

@article{moreira2017,
  author = {Moreira, Alan and Muir, Tyler},
  title = {Volatility-Managed Portfolios},
  journal = {The Journal of Finance},
  volume = {72},
  number = {4},
  pages = {1611--1644},
  year = {2017}
}

@inproceedings{malekzadeh2024,
  author = {Malekzadeh, Parvin and Poulos, Zissis and Chen, Jacky and Wang, Zeyu and Plataniotis, Konstantinos N.},
  title = {{EX-DRL}: Hedging Against Heavy Losses with {EXtreme} Distributional Reinforcement Learning},
  booktitle = {Proceedings of the 5th ACM International Conference on AI in Finance},
  year = {2024}
}

@inproceedings{kong2025,
  author = {Kong, Yaxuan and Hwang, Yoontae and Kaiser, M. and Vryonides, Chris and Oomen, Roel and Zohren, Stefan},
  title = {Fusing Narrative Semantics for Financial Volatility Forecasting},
  booktitle = {Proceedings of the 6th ACM International Conference on AI in Finance},
  year = {2025}
}

\end{document}